\newcommand*{\Scale}[2][4]{\scalebox{#1}{$#2$}}%
\def\beq{\begin{equation}}
\def\eeq{\end{equation}}
\def\beqa{\begin{eqnarray}}
\def\eeqa{\end{eqnarray}}
\def\beqan{\begin{eqnarray*}}
\def\eeqan{\end{eqnarray*}}
\def\EE{{\mathbb{E}}}
\def\PP{{\mathbb{P}}}
\def\argmin{\mathop{\mathrm{arg\,min}}}
\def\argmax{\mathop{\mathrm{arg\,max}}}
\newcommand{\Ic}{{\cal I}}
\newcommand{\Jc}{{\cal J}}
\newtheorem{definition}{Definition}
\newtheorem{theorem}{Theorem}
\newtheorem{example}{Example}
\def\FF{{\mathbb{F}}}
\def\tm1{t\! - \! 1}
\def\tp1{t\! + \! 1}
\def\fbf{\mathbf{f}}
\def\pbf{\mathbf{p}}
\def\qbf{\mathbf{q}}
\def\Cbf{\mathbf{C}}
\newcommand{\Ec}{{\cal E}}
\def\Gbf{\mathbf{G}}
\def\Qbf{\mathbf{Q}}
\def\Uc{\mathcal{U}}
\newcommand{\Vc}{{\cal V}}
\newcommand{\Kc}{{\cal K}}
\begin{document}

\bibliographystyle{IEEEtran}


\title{ \fontsize{23.65}{16}\selectfont Cache-Aided Coded Multicast for Correlated Sources}

\author{P. Hassanzadeh, A. Tulino, J. Llorca, E. Erkip
\thanks{P. Hassanzadeh  and  E. Erkip are with the ECE Department of New York University, Brooklyn, NY. Email: \{ph990, elza\}@nyu.edu}
\thanks{J. Llorca  and A. Tulino are with Bell Labs, Nokia, Holmdel, NJ, USA. Email:  \{jaime.llorca, a.tulino\}@nokia.com}
\thanks{A. Tulino is with the DIETI, University of Naples Federico II, Italy. Email:  \{antoniamaria.tulino\}@unina.it}
}

\maketitle
\begin{abstract}
The combination of 
edge caching and coded multicasting is a promising approach to improve the efficiency of content delivery over cache-aided 
networks.
The global caching gain resulting from content overlap distributed across the network in current 
solutions is limited due to the increasingly personalized nature of the content consumed by users. 
In this paper, the cache-aided coded multicast problem is generalized to account for the correlation among the network content by formulating a source compression problem with distributed side information.
A correlation-aware achievable scheme is proposed and an upper bound on its performance is derived.
It is shown that considerable 
load reductions can be achieved, compared to state of the art correlation-unaware schemes, when caching and delivery phases specifically account for the correlation among the content files.


\end{abstract}

\section{Introduction}~\label{sec:Introduction}
Proper distribution of popular content across the network caches is emerging as one of the promising approaches to address the exponentially growing traffic in current wireless networks. Recent studies \cite{maddah14fundamental,maddah14decentralized,ji14average,ji15order,ji15groupcast,ji15efficient} have shown that, in a cache-aided network, exploiting globally cached information in order to multicast coded messages that are useful to a large number of receivers exhibits overall network throughput that is proportional to the aggregate cache capacity. The fundamental rate-memory trade-off in a broadcast caching network has been characterized in \cite{maddah14fundamental,maddah14decentralized,ji14average,ji15order,ji15groupcast,ji15efficient }. While these initial results are promising, these studies treat the network content as independent pieces of information, and do not account for the additional potential gains arising from further compression of correlated content distributed across the network.
,
In this paper, we investigate how the correlations among the library content can be explored in order to further improve the performance in cache-aided networks. ,We consider a network setup similar to \cite{maddah14fundamental,maddah14decentralized,ji14average,ji15order,ji15groupcast,ji15efficient}, but assume that the files in the library are correlated. Such correlations are especially relevant among content files of the same category, such as episodes of a TV show or same-sport recordings, which, even if personalized, may share common backgrounds and scene objects.

As in existing literature on cache-aided networks, 
we assume that the network operates in two phases: a caching (or placement) phase taking place at network setup followed by a delivery phase where the network is used repeatedly in order to satisfy receiver demands. The design of the caching and delivery phases forms what is referred to as a {\em caching scheme}. 
During the caching phase, caches are filled with content from the library according to a properly designed {\em caching distribution}. During the delivery phase, the sender compresses the set of requested files into a multicast codeword by computing an {\em index code} while exploring all correlations that exist among the requested and cached content.
In \cite{timo2015rate}, the rate-memory region for a correlated library was characterized for lossy reconstruction in a scenario with only two receivers and a single cache. In this paper, we consider correlation-aware lossless reconstruction in a more general setting with multiple caches, and with a nonuniform demand distribution.

We propose an achievable correlation-aware scheme, named Correlation-Aware RAP caching and Coded Multicast delivery (CA-RAP/CM), in which receivers store content pieces based on their popularity as well as on their correlation with the rest of the file library during the caching phase, and receive compressed versions of the requested files according to the information distributed across the network and their joint statistics during the delivery phase.
The scheme consists of: 1) exploiting file correlations to store the {\em more relevant bits} during the caching phase such that expected delivery rate is reduced, and 2) optimally designing the coded multicast codeword based on the joint statistics of the library files and the aggregate cache content during the delivery phase. 
Additional refinements are transmitted, when needed, in order to ensure lossless reconstruction of the requested files at each receiver.
Given the  exponential complexity of CA-RAP/CM, we then provide an algorithm which approximates  CA-RAP/CM in polynomial time, and we derive an upper bound on the achievable expected rate. We numerically compare the rates achieved by the proposed correlation-aware scheme with existing correlation-unaware schemes, and our results confirm the additional gains achievable, especially for small memory size.



The paper is organized as follows. The problem formulation is presented in Sec. \ref{sec:Problem Formulation}. In  Sec. \ref{sec:Achievable Scheme} we describe CA-RAP/CM  and  its polynomial-time approximation, and we quantify the associated rate-memory trade-off.
We provide numerical simulations and concluding remarks in Secs. \ref{sec:Simulations} and \ref{sec:Conclusion}, respectively.

\section{Network Model and Problem Formulation}\label{sec:Problem Formulation}

{\bf Notation:} For ease of exposition, we use $\{A_i\}$ to denote the set of elements $\{A_i:i\in\mathcal I\}$, with $\mathcal I$ being the domain of index $i$, and we define $[n]\triangleq\{1,\dots,n\}$. $\FF_2^*$ denotes the set of finite length binary sequences.

Similar to previous work \cite{ji14average,ji15order}, we consider a broadcast caching network composed of one sender (e.g., base station) with access to a file library composed of $m$ files,  
each of entropy $F$ bits. 
Without loss of generality, we assume that each file $f\in [m]$ is represented by a vector of i.i.d. random binary symbols of length $F$, ${\sf W}_f \in \FF_2^F$, { and hence $H({\sf W}_f) = F$ for all $f\in [m]$.}\footnote{ The results derived in this paper can be easily extended to the case in which the symbols within each file are not necessarily binary and independent.}
We assume that the symbols belonging to different files can be correlated, i.e.,  $H({\sf W}_1,\dots,{\sf W}_m)\leq mF$, and we denote their joint distribution by $P_{\mathcal W}$.
The sender communicates with $n$ receivers (e.g., access points or user devices) $\mathcal U=\{1,\dots,n\}$ through a shared error-free multicast link. Each receiver has a cache of size $MF$ bits and requests files in an independent and identically distributed (i.i.d.) manner according to a demand distribution $\qbf = (q_1,\dots,q_m)$, where $q_f$ denotes the probability of requesting file  $f\in [m]$. 
A caching scheme for this network consists of:

\begin{itemize}
\item {\textbf{Cache Encoder:}}
Given a library realization $\{W_f\}$, the cache encoder at the sender computes the cache content $Z_u(\{W_f\})$ at receiver $u \in \mathcal U$, using the set of functions
$\{Z_u: \FF_2^{m F} \rightarrow \FF_2^{MF} : u \in \Uc\}$.
The encoder designs the cache configuration $\{Z_u\}$ jointly across receivers, taking into account global system knowledge such as the number of receivers and their cache sizes, the number of files, their aggregate popularity, and their joint distribution $P_{\mathcal W}$.

\item{\textbf{Multicast Encoder:}}
After the caches are populated, the network is repeatedly used for different demand realizations. At each use of the network, the receivers place a random demand vector 
$\bold{f} \in[m]^{n}$
at the sender, 
according to $\qbf$. The demand realization is denoted by $\fbf =(f_1,\dots,f_n)$.
The multicast encoder at the sender computes codeword $X(\fbf, \{W_f\},\{Z_u\})$, which will be transmitted over the shared link, 
using the function $X : [m]^{n} \times  \FF_2^{mF} \times \FF_2^{nMF} \rightarrow \FF_2^*$. In this work, we consider a fixed-to-variable almost-lossless framework.

\item{\textbf{Multicast Decoders:}}
Receivers recover their requested files using their cached content and the received multicast codeword. More specifically, receiver $u$ recovers $W_{f_u}$ using its decoding function $\zeta_u : [m]^n \times \FF_2^* \times \FF_2^{MF}  \rightarrow \FF_2^{F}$,
as $\widehat{W}_{f_u} = \zeta_u(\fbf, X,Z_u)$.

\end{itemize}

The worst-case (over the file library) probability of error of the corresponding caching scheme is defined as
\begin{align} \label{perr}
& P_e^{(F)} = \sup_{\{W_f : f \in [m]\}} \PP \left(\widehat{W}_{f_u}  \neq W_{f_u} \right). \notag
\end{align}
In line with previous work \cite{ji14average,ji15order},
the (average) rate of the overall {caching scheme} is defined as
\begin{equation} \label{average-rate}
R^{(F)} = \sup_{\{ W_f : f \in [m] \}} \; \frac{\EE[J(X)]}{F}.
\end{equation}
where $J(X)$ denotes the length (in bits) of the multicast codeword $X$.
\begin{definition} \label{def:achievable-rate}
A rate $R$ is {\em achievable} if there exists a sequence of caching schemes for increasing file size $F$ such that
{$\lim_{F \rightarrow \infty} P_e^{(F)} = 0 \notag$, and $\limsup_{F \rightarrow \infty} R^{(F)} \leq  R.$}
\end{definition}

The goal of this paper is to design a caching scheme that results in a small achievable rate $R$.

\section{Correlation-Aware RAP Caching and Coded Multicast Delivery (CA-RAP/CM)}\label{sec:Achievable Scheme}
In this section, we introduce CA-RAP/CM, a correlation-aware caching scheme, which is an extension of a RAndom Popularity-based (RAP) caching policy followed by a Chromatic-number Index Coding (CIC) delivery policy \cite{ji14average, ji15order}. In CA-RAP/CM, both the cache encoder and the multicast encoder are designed according to the joint distribution $P_{\mathcal W}$, in order to exploit the correlation  among the library files. First, consider the following motivating example. 
\begin{example}\label{ex:CA RAP}
Consider a file library with $m=4$ uniformly popular files $\{W_1,W_2,W_3,W_4\}$ each with entropy $F$ bits. 
We assume that the pairs $\{W_1 ,W_2\}$ and $\{W_3, W_4\}$ are independent, while correlations exist between $W_1$ and $W_2$, and between $W_3$ and $W_4$.
Specifically, $H(\sf W_1|\sf W_2)=H(\sf W_2|\sf W_1)=F/4$ and $H(\sf W_3|\sf W_4)=H(\sf W_4|\sf W_3)=F/4$.
The sender is connected to $n=2$ receivers $\{u_1, u_2\}$ with cache size $M=1$.
While a correlation-unaware scheme (e.g., \cite{maddah14decentralized,ji14average}) would first compress the files separately and then cache $1/4^{th}$ of each file at each receiver, existing file correlations can be exploited to cache the more relevant bits. 
For example, one could split
files $W_2$ and $W_4$ into two parts $\{ W_{2,1}, W_{2,2}\}$ and $\{ W_{4,1}, W_{4,2}\}$, each with entropy $F/2$, and cache 
  $\{ W_{2,1}, W_{4,1}\}$ at $u_1$ and 
  $\{ W_{2,2}, W_{4,2}\}$ at $u_2$, as shown in Fig. \ref{fig:Examples}. 
During the delivery phase, consider the worst case demand,
e.g., $\fbf = (W_3,W_1)$, 
the sender first multicasts the XOR of the compressed parts $W_{2,1}$ and $W_{4,2}$. Refinement segments, with refinement rates $H(\sf W_3|\sf W_4)$ and $H(\sf W_1|\sf W_2)$ can then be transmitted to enable lossless reconstruction, resulting in 
a total rate $R=1$.
Note that a correlation-unaware scheme 
would need a total rate $R=1.25$ regardless of the demand realization \cite{maddah14decentralized,ji14average}.
\end{example}
%

\begin{figure}
  \centering
  \includegraphics[width=2.7in]{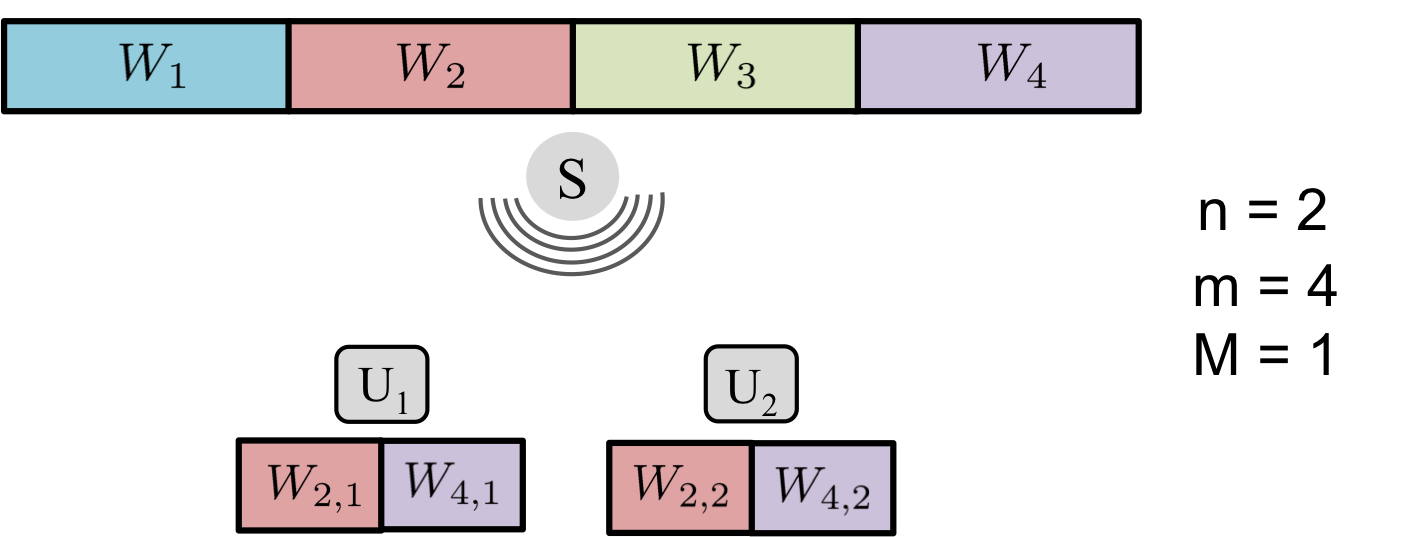}
  \caption{Network setup and cache configuration of Example \ref{ex:CA RAP}.} 
  \label{fig:Examples}
\end{figure}

The main components of the CA-RAP/CM scheme are: $i$) a {\em Correlation-Aware Random Popularity Cache Encoder} (CA-RAP) and  $ii$) a {\em Correlation-Aware Coded Multicast Encoder} (CA-CM).

\subsection{Correlation-Aware Random Popularity Cache Encoder}\label{subsec:CA RAP}

The CA-RAP cache encoder is a correlation-aware random fractional cache encoder whose key differentiation from the cache encoder RAP, introduced in \cite{ji14average, ji15order}, is to choose the fractions of files to be cached according to both their popularity as well as their correlation with the rest of the library. 
Similar to the cache encoder RAP, each file is partitioned into $B$ equal-size packets, with packet $b\in[B]$ 
of file $f\in[m]$ denoted by $W_{f,b}$. 
The cache content at each receiver is selected according to a \textit{caching distribution}, $\pbf = (p_{1}, \dots, p_{m})$ with $0 \leq p_{f} \leq 1/M$ $\forall f \in[m]$ and $\sum_{f = 1}^{m} p_{f} = 1$, which is optimized 
to minimize the rate of the corresponding index coding delivery scheme.  
For a given caching distribution $\pbf$, each receiver caches a subset of $p_{f}MB$ distinct packets from each file $f \in [m]$, independently at random. We denote by $\Cbf=\{ \Cbf_1,\dots,\Cbf_n \}$ the packet-level cache configuration, where $\Cbf_{u}$ denotes the set of file-packet index pairs, $(f,b)$, $f\in[m]$, $b\in[B]$, cached at receiver $u$.
In Example \ref{ex:CA RAP}, $B=2$,  the caching distribution corresponds to  $p_{W_2}=p_{W_4}=1/2$, $p_{W_1}=p_{W_3}=0$, and the packet-level cache configuration is $\Cbf=\{\{(2,1),(4,1)\},\{(2,2),(4,2)\}\}$. 
While  the caching distribution of a correlation-unaware scheme prioritizes the caching of packets according to the aggregate popularity distribution (see \cite{ji14average, ji15order}), the CA-RAP caching distribution accounts for both the aggregate popularity and the correlation of each file with the rest of the library when determining the amount of packets to be cached from each file.

\subsection{Correlation-Aware Coded Multicast Encoder}

{For a given demand realization $\fbf$, the packet-level demand realization is denoted by $\Qbf=[\Qbf_1,\dots,\Qbf_n]$, where $\Qbf_{u}$ denotes the file-packet index pairs $(f,b)$ associated with the packets of file $W_{f_u}$ requested, but not cached, by receiver $u$. }

The CA-CM encoder capitalizes on the additional coded multicast opportunities that arise from
incorporating cached packets that are, not only equal to, but also correlated with the requested packets into the multicast codeword.
The CA-CM encoder operates by constructing a {\em clustered conflict graph}, and computing a linear index code from a valid\footnote{A valid coloring of a graph is an assignment of colors to the vertices of the graph such that no two adjacent vertices are assigned the same color.} coloring of the conflict graph, as described in the following.


\subsubsection{Correlation-Aware Clustering} \label{subsubsec:correlation-Aware Source Clustering}

For each requested packet $W_{f, b}$,  $(f,b)\in\Qbf$, the correlation-aware clustering procedure computes a {\em $\delta$-ensemble} $\mathcal G_{f,b}$
, where $\mathcal G_{f,b}$ is the union of $W_{f,b}$ and the subset of all cached and requested packets that are $\delta$-correlated with $W_{f,b}$, as per the following definition.\footnote{In practice, the designer shall determine the level at which to compute and exploit correlations between files based on performance-complexity tradeoffs. For example, in video applications, a packet may represent a video block or frame.}




\begin{definition}({\bf $\delta$-Correlated Packets})\label{def:delta-cor}
For a given threshold $\delta\leq1$, packet ${\sf W}_{f,b}$  is $\delta$-correlated with packet ${\sf W}_{f',b'}$ if $H({\sf W}_{f,b},{\sf W}_{f',b'})\leq (1+\delta)F $ bits, for all $f,f' \in [m]$ and $b,b' \in [B]$.
\end{definition}

\noindent{This classification { (clustering)} is the first step for constructing the clustered conflict graph.}


\subsubsection{Correlation-Aware Cluster Coloring} \label{subsubsec:Correlation-Aware Chromatic Cluster Covering}
The clustered conflict graph $\mathcal H_{\Cbf,\Qbf}=(\Vc, \Ec)$ is constructed as follows:

\begin{itemize}
\item
Vertex set $\Vc$: The vertex (node) set $ \Vc=\widehat{\mathcal V}\cup\widetilde{\mathcal  V}$ is composed of root nodes $\widehat{\mathcal V}$ and virtual nodes $\widetilde{\mathcal V}$.

\begin{itemize}
 \item Root Nodes: 
 There is a root node $ \hat v\in\widehat{\mathcal V}$ for each packet 
 $W_{f,b}$, 
 requested by each receiver, 
 uniquely identified by the pair  $\{ \rho(\hat v),\mu(\hat v)\}$,  with $\rho(\hat v)$ denoting the packet identity, i.e., file-packet index pair $(f,b)$, 
 and  $\mu(\hat v)$ denoting the receiver requesting it.

\item Virtual Nodes:
For each root node $\hat v \in \widehat{\mathcal \Vc}$,  all the packets in the 
$\delta$-ensemble $\mathcal G_{\rho(\hat v)}$ other than $\rho(\hat v)$ are represented as virtual nodes in $\widetilde{\mathcal V}$.
We identify virtual node $\tilde v\in\widetilde{\mathcal V}$, having $\hat v$ as a root note, with the triplet $\{ \rho(\tilde v), \mu(\hat v),r(\tilde v)\}$, where $\rho(\tilde v)$ indicates the \mbox{packet identity} associated with vertex $\tilde v$, $\mu(\hat v)$ indicated the receiver requesting $\rho(\hat v)$, and  $r(\tilde v) = \hat v$ is the root of the $\delta$-ensemble that $\tilde v$ belongs to.
\end{itemize}

We denote by $\Kc_{\hat v} \subseteq \Vc$ the set of vertices containing root node $\hat v \in \widehat{\Vc}$ and the virtual nodes corresponding to the packets in its $\delta$-ensemble  $\mathcal G_{\rho(\hat v)}$, and we refer to $\Kc_{\hat v}$ as the {\em cluster} of root node $\hat v$.

\item Edge set $\Ec$: For any pair of vertices $v_1, v_2 \in \Vc $, there is an edge between $v_1$ and $v_2$ in $\mathcal E$  if  both $v_1$ and  $v_2$ are in the same cluster, 
or if the two following conditions are jointly satisfied  1) $\rho(v_1) \neq \rho(v_2)$, 2) packet $\rho(v_1) \notin \Cbf_{\mu(v_2)}$ or packet $\rho(v_2) \notin \Cbf_{\mu(v_1)}$.

 Given a valid coloring of $\mathcal H_{\Cbf,\Qbf}$, a valid \textit{cluster coloring} of $\mathcal H_{\Cbf,\Qbf}$  consists  of assigning to each cluster $\Kc_{\hat v},\; \forall \hat v\in \widehat{\Vc}$, one of the colors assigned to the vertices inside that cluster. For each color in the cluster coloring, only the packets with the same color as the color assigned to their corresponding cluster, are XORed together and multicasted to the users. Using its cached information and the received XORed packets, each receiver is able to reconstruct a (possibly) distorted version of its requested packet, due to the potential reception of a packet that is $\delta$-correlated with its requested one. The encoder transmits refinement segments, when needed, to enable lossless reconstruction of the demand at each receiver. 
The coded multicast codeword results from concatenating: 1) for each color in the cluster coloring, the XOR of the packets with the same color, and, 2) for each receiver, if needed, the refinement segment.
The CA-CM encoder selects the valid cluster coloring corresponding to the shortest coded multicast codeword.


\end{itemize}

{\Remark Note that if correlation is not considered or non-existent, the clustered conflict graph is equivalent to the conventional index coding conflict graph \cite{ji14average, ji15order}, i.e., the subgraph of $\mathcal H_{\Cbf,\Qbf}$ resulting from considering only the root nodes $\widehat{\mathcal V}$.}

{\Remark It is important to note that the number of colors in the cluster coloring chosen by CA-CM is always smaller than or equal to the chromatic number\footnote{The chromatic number of a graph is the minimum number of colors over all valid colorings of the graph.} of the conventional index coding conflict graph. 
{ Such reduction in the number of colors is obtained by considering correlated packets that are cached in the network, which possibly results in less conflict edges and provides more options for coloring each cluster. Intuitively, CA-CM allows for the requested packets that had to be transmitted by themselves otherwise, to be represented by correlated packets that can be XORED together with other packets in the multicast codeword.}
} 

\subsection{Greedy Cluster Coloring (GClC)}\label{subsec: algorithms}
Given that graph coloring, and by extension cluster coloring, is 
NP-Hard, 
in this section we propose {\em Greedy Cluster Coloring (GClC)}, 
a polynomial-time 
approximation to the cluster coloring problem. GClC extends the existing Greedy Constraint Coloring (GCC) scheme \cite{ji15order}, to account for file correlation in cache-aided networks and  
consists of a combination of two coloring schemes, such that the scheme resulting in the lower number of colors (i.e., shorter multicast codeword) is chosen. 
Uncoded refinement segments are transmitted to ensure lossless reconstruction of the demand.


\begin{algorithm}[ht]
\caption{ GClC$_1$ - Greedy Cluster Coloring 1}
\label{algorithm: GClC 1}
\small{
\begin{algorithmic}[1]
\WHILE {$\widehat\Vc \neq \emptyset$}
    \STATE Pick any root node $\hat v \in \widehat\Vc$

    \STATE Sort $\Kc_{\hat v}$ in decreasing order of receiver label size\footnotemark, such that for $v_t,v_{t+1} \in  \Kc_{\hat v}$, $|\{\mu(v_t),\eta(v_t)\}| \geq |\{\mu(v_{t+1}),\eta(v_{t+1})\}|$ where $v_t$ denotes the $t^{th}$ vertex in the ordered sequence.
    \STATE $t = 1$
    \WHILE {$t \leq | \Kc_{\hat v}|$}
         \STATE Take $ v_t \in \Kc_{\hat v}$; Let $I_{v_t} = \{ v_t\}$
                 \FORALL{$ v \in \Vc\backslash \Big\{ \Kc_{\hat v} \cup I_{v_t}  \Big\}$}
            \IF{ $\Big\{$ There is no edge between $v$ and $I_{v_t}$ $\Big\}$ $\bigwedge$ \\
             $\Big\{ \{\mu(v),\eta(v)\} = \{\mu(v_t),\eta(v_t)\} \Big\} $ }
                \STATE $I_{v_t} = I_{v_t} \cup \{v\}$
            \ENDIF
        \ENDFOR
    \STATE ${v_t}^{*} = \argmax\limits_{\{ v_\tau:\, \tau=1, \ldots,t \}} |I_{v_\tau}|$
    \IF{ $|I_{v_t^{*}}| \geq |\{\mu(v_{t}),\eta(v_{t})\}|$  or $t=| \Kc_{\hat v}|$ }
        \STATE $\Ic = I_{v_t^{*}} $
        \STATE $t=| \Kc_{\hat v}|+1$ 
    \ELSE
        \STATE $t = t+1$
    \ENDIF
    \ENDWHILE
    \STATE Color all vertices in $\Ic$ with an unused color.
    \STATE {$\Jc = \{ r(v) : v \in \Ic\}$}
     \STATE { $\widehat \Jc = \Big \{\hat  v \!\in \!\widehat\Vc :  \exists  v \! \in \!\Ic, \, \! \Big\{ \mu(\hat v)=\mu(v) \Big\} \! \bigwedge  \!  \Big\{\rho(v) \in \mathcal G_{\rho(\hat v)}  \Big\}     \Big \}$}

    \STATE $\widehat\Vc  \leftarrow \widehat\Vc\backslash  \Jc \cup \widehat \Jc$, $\Vc \leftarrow \Vc\backslash \bigcup\limits_{\hat v \in \Jc\cup \widehat\Jc} \Kc_{\hat v}$
\ENDWHILE

\end{algorithmic}
}
\end{algorithm}
\footnotetext{Packets with equal label size are ordered such that $H({\sf W}_{\rho(\hat v)}|{\sf W}_{\rho(v_t)})\leq H({\sf W}_{\rho(\hat v)}|{\sf W}_{\rho(v_{t+1})})$ }
{ In GClC, we assume that any vertex (root node or virtual node) $v\in\mathcal V$ is identified by the triplet $\{\rho(v),\mu(v), r(v)\}$, which is uniquely specified by the \mbox{packet identity} associated with $v$ and by the cluster to which $v$ belongs.} Specifically, given a vertex $v \in   \Kc_{\hat v}$, then $\rho(v)$ indicates the \mbox{packet identity} associated with vertex $v$,  while $\mu(v)=\mu(\hat v)$  and $r(v)=\hat v$. 
Further define $\eta(v)\triangleq\{u\in\mathcal U: \rho(v) \in \Cbf_u\}$ for any $v\in \Vc$. 
The unordered set of receivers $\{\mu(v), \eta(v)\}$, corresponding to the set of receivers either requesting or caching packet $\rho(v)$, is referred to as the {\em receiver label} of vertex $v$.

Algorithm GClC$_1$ starts from a root node $\hat v\in\widehat\Vc$ among those not yet selected, and searches for the node $v_t \in \Kc_{\hat v}$ 
which forms the largest independent set $\Ic$  with all the vertices in $ \Vc$ having its same receiver label.\footnote{An independent set is a set of vertices in a graph, no two of which are adjacent.} Next, vertices in set $\Ic$ are assigned the same color (see lines 20-23). 
Algorithm GClC$_2$ is based on a correlation-aware extension of GCC$_2$ in \cite{ji15order}, 
and corresponds to a generalized uncoded (naive) multicast: 
For each root node  $\hat v \in \widehat\Vc$, whose cluster has not yet been colored, 
only the vertex  $ v_t\in \mathcal K_{\hat v}$ whom is found among the nodes of more clusters, i.e., correlated with a larger number of requested packets, is colored and its color is assigned to $\mathcal K_{\hat v}$ and all clusters containing $v_t$.
For both GClC$_1$ and GClC$_2$,  when the graph coloring algorithm terminates, only a subset of the graph vertices, $\mathcal V$, are colored such that only one vertex from each cluster in the graph is colored. This is equivalent to identifying a valid cluster coloring where each cluster is assigned the color of its colored vertex.
Between GClC$_1$ and GClC$_2$, the cluster coloring resulting in the lower number of colors is chosen. For each color assigned during the coloring, the packets with the same color are XORed together, and multicasted.

\subsection{Performance of CA-RAP/CM}~\label{subsec:correlation aware rate}

In this section, we provide an upper bound of the rate achieved with CA-RAP/CM. 
For a given $\delta$, we define  the {\em match matrix}  $\Gbf$ as the  matrix whose element $\Gbf_{f'f}$ $(f, f') \in [m]^2$ is the largest value such that for each packet $W_{f,b}$ of file $f$, there are at least $\Gbf_{f'f}$ packets of file $f'$ that are $\delta$-correlated with $W_{f,b}$, and are distinct from the packets correlated with packet $W_{f,b'}$, $\forall \, b'\in[B]$.

\begin{theorem}\label{thm:general CA-RAP/CM}
Consider a broadcast caching network with $n$ receivers, cache capacity $M$, demand distribution $\qbf$, a caching distribution $\pbf$, library size $m$, correlation parameter $\delta$,  and match matrix $\Gbf$. Then, the achievable expected rate of CA-RAP/CM, $R(\delta , \pbf)$, is upper bounded, as $F \rightarrow \infty$, with high probability as
\begin{align}
R(\delta , \pbf)  \leq  \min \left \{\psi(\delta , \pbf) + \Delta R(\delta , \pbf),\bar m \right \}, 
\end{align}
where
\begin{align}
& \psi(\delta , \pbf) \triangleq  \sum\limits_{\ell =1 }^{n} \binom{n}{\ell} \sum\limits_{f =1}^m \rho_{\ell,f}   \lambda_{\ell,f} ,\notag \\
& \Delta R(\delta , \pbf) \leq \sum\limits_{\ell =1 }^{n} \ell\binom{n}{\ell} \sum\limits_{f =1}^m \rho_{\ell,f}^*  \lambda_{\ell,f}^* \delta \notag \\
& \quad + 
 n \sum\limits_{f =1}^m q_f (1-p_{f}M)\Big(1 - \prod\limits_{f' =1}^{m}  (1-p_{f'}M)^{{(\Gbf-{\bf I})}_{f'f}}\Big)  \delta,
\notag \\
&\bar m \triangleq   \sum_{f =1}^m ( 1- (1-q_f)^{n} ), \notag
\end{align}
with  
\begin{align}
&\lambda_{\ell,f} \triangleq  \Big(\prod\limits_{f'=1}^m (1-p_{f'}M)^{(n-\ell+1)\Gbf_{f'f}} \Big)\notag \\ 
& \qquad\qquad\qquad\qquad\quad\; \Scale[0.99]{  \bigg(1-\prod\limits_{f'=1}^m\Big(1-(p_{f'}M)^{\ell-1}\Big)^{\Gbf_{f'f}} \bigg) } ,\notag
\end{align}
\begin{align}
& \rho_{\ell,f}  \triangleq \mathbb P \Big\{ f = \argmax\limits_{d \in \mathcal D} \lambda_{\ell,d} \Big\} ,\notag\\
& \lambda_{\ell,f}^*  \triangleq     \Big( \prod\limits_{f' =1}^m (1-p_{f'}M)^{(n-\ell+1)\Gbf_{f'f}} \Big)  \notag \\
&  \quad \Scale[0.99]{ \Big(1-(p_{f}M)^{\ell-1}\Big) \bigg(1-\prod\limits_{f'=1}^m\Big(1-(p_{f'}M)^{\ell-1}\Big)^{(\Gbf-{\bf I} )_{f'f} } \bigg)} ,\notag \\
&\rho_{\ell,f}^* \triangleq  \mathbb P \Big\{ f = \argmax\limits_{d \in \mathcal D} \lambda^*_{\ell,d} \Big\} ,  \notag 
\end{align}
$\mathcal D$ denoting a random set of $\ell$ elements selected in an i.i.d. manner from $[m]$, and $\bf I$ denoting the identity matrix.
\end{theorem}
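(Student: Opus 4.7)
The plan is to prove the bound by separately analyzing (i) the expected number of colors produced by each of the two greedy cluster coloring variants, (ii) the additional rate needed to transmit refinement segments so as to guarantee lossless reconstruction whenever a $\delta$-correlated packet (rather than the actual requested packet) is delivered inside the XOR, and (iii) the trivial uncoded upper bound $\bar m$ obtained by transmitting each distinct requested file separately. Since the CA-CM encoder picks the shortest codeword, the achievable rate is upper bounded by the minimum of these three quantities.

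First I would control the main term $\psi(\delta,\pbf)$ as the expected number of colors returned by GClC$_1$. Fixing a receiver subset of size $\ell$ and a root node $\hat v$ with $\mu(\hat v)$ in that subset, the algorithm attempts to grow an independent set of size $\ell$ by finding, for each of the $\ell-1$ other receivers, some packet in $\mathcal G_{\rho(\hat v)}$ that this receiver has cached, and by ensuring that none of the remaining $n-\ell$ receivers (together with $\mu(\hat v)$ itself) has any packet of $\mathcal G_{\rho(\hat v)}$ cached that would create a conflict. Under the CA-RAP random placement, packets are cached independently with probabilities $p_{f'}M$, and the match matrix $\Gbf$ records how many packets of each file $f'$ lie inside the $\delta$-ensemble of any given packet of file $f$. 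A direct i.i.d.\ Bernoulli computation then yields: the probability that no $\delta$-correlated packet from file $f'$ is cached at any of the $n-\ell+1$ receivers is $(1-p_{f'}M)^{(n-\ell+1)\Gbf_{f'f}}$, and the probability that at least one $\delta$-correlated packet is cached at each of the $\ell-1$ receivers is $1-\prod_{f'}(1-(p_{f'}M)^{\ell-1})^{\Gbf_{f'f}}$. The product of these two factors is precisely $\lambda_{\ell,f}$. Summing over the $\binom{n}{\ell}$ possible receiver subsets and taking the argmax file type across the subset (which occurs with probability $\rho_{\ell,f}$) gives the bound $\psi(\delta,\pbf)$ on the number of independent sets of size $\ell$ extracted by GClC$_1$, in the large-$F$ regime where the number of packets per file concentrates on its expectation.

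Next I would bound the refinement rate $\Delta R(\delta,\pbf)$. Whenever a color class in GClC$_1$ contains a virtual node at some receiver rather than its root node, the receiver decodes a $\delta$-correlated proxy and needs at most $\delta F$ additional bits (by Definition 2, since $H({\sf W}_{f,b}\mid{\sf W}_{f',b'})\le \delta F$) to reconstruct the requested packet losslessly. The expected number of such proxy transmissions per color class is at most $\ell$, one per receiver in the subset, and the probability that a given file $f$ triggers such an event for a size-$\ell$ independent set is captured by $\lambda_{\ell,f}^*$, which differs from $\lambda_{\ell,f}$ only in that the requesting receiver itself is forced to not have the exact packet cached (the extra $(1-(p_fM)^{\ell-1})$ factor) and the correlated support excludes the requested file identity (reflected by $\Gbf-\Idm$). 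The second summand in $\Delta R(\delta,\pbf)$ accounts for the refinements required by GClC$_2$'s generalized uncoded multicast: each of the $n$ receivers, requesting file $f$ with probability $q_f$ and missing the exact packet with probability $(1-p_fM)$, might receive a $\delta$-correlated surrogate from some other file, with probability $1-\prod_{f'}(1-p_{f'}M)^{(\Gbf-\Idm)_{f'f}}$, each contributing at most $\delta F$ refinement bits.

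Finally, I would close by noting the naive bound $\bar m = \sum_f(1-(1-q_f)^n)$, which is simply the expected number of distinct requested files, each of which can always be transmitted uncoded at unit rate; this is achievable independently of the coloring output and is included in the outer minimum. The main obstacle I anticipate is the precise probabilistic analysis behind $\lambda_{\ell,f}$: because the $\delta$-ensembles of different root nodes can overlap, independence among the coloring events is only asymptotic, and the argument relies on the large-$F$ concentration of the empirical packet statistics (paralleling, but extending, the analysis of GCC in \cite{ji15order}) to justify decoupling the Bernoulli factors dictated by $\Gbf$ and obtaining the stated expression with high probability.
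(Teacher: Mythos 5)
The paper itself contains no proof of Theorem~\ref{thm:general CA-RAP/CM} --- it defers entirely to the extended version \cite{longerVersion} --- so there is nothing in this manuscript to check your argument against step by step. With that caveat, your overall decomposition (a coloring term $\psi$, a refinement term $\Delta R$, the naive-multicast cap $\bar m$, and the encoder taking the minimum) is the natural one and is consistent with both the structure of the stated bound and the GCC analysis of \cite{ji15order} that the theorem extends. Your reading of $\lambda_{\ell,f}$ --- the first product as the probability that no packet of the $\delta$-ensemble of a packet of file $f$ is cached at any of the $n-\ell+1$ receivers outside the coding subset, the second factor as the probability that some single ensemble packet is simultaneously cached at all $\ell-1$ other receivers of the subset --- matches the formula, as does your identification of $\bar m$ with the expected number of distinct requested files.

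Two genuine gaps remain. First, your attribution of the second summand of $\Delta R$ to GClC$_2$ does not fit the formula: the exponent on $(1-p_{f'}M)$ there is $(\Gbf-{\bf I})_{f'f}$ with a single power per packet and no dependence on $\ell$ or on a receiver subset, i.e., it is the probability that at least one correlated packet other than the requested one sits in \emph{one} cache --- most naturally the requesting receiver's own cache. That term accounts for requests served locally by a $\delta$-correlated cached packet plus a $\delta F$-bit refinement, a case your enumeration omits; note also that $\bar m$ carries no $\delta$ term and already covers the uncoded alternative on its own. Second, the step you yourself flag as the main obstacle is where the proof actually lives: the events ``some packet of $\mathcal G_{\rho(\hat v)}$ is cached at all $\ell-1$ receivers'' are not independent across root nodes, because $\delta$-ensembles overlap and because $\Gbf_{f'f}$ is only a guaranteed count of distinct correlated packets per packet of $f$. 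You would need an explicit concentration argument as $B,F\rightarrow\infty$, a union bound over the $\binom{n}{\ell}$ subsets, and a verification that the greedy order used by GClC$_1$ never produces more color classes than this ensemble-level count. As written, the proposal is a plausible road map with the right probabilistic ingredients, but it is not yet a proof, and it cannot be certified against the paper's own argument since none is given here.
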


\begin{proof}
For proof of Theorem \ref{thm:general CA-RAP/CM}  see \cite{longerVersion}.

\end{proof}



The CA-RAP caching distribution is computed as the minimizer of the corresponding rate upper bound, $\pbf^*=\argmin_{\pbf} R({\delta},\pbf)$, resulting in the optimal CA-RAP/CM rate $R({\delta},\pbf^*)$. 
 The resulting distribution $\pbf^{*}$ may not have an analytically tractable expression in general, but  can be numerically optimized for the specific library realization.
We remark that the rate upper bound is derived for a given correlation parameter $\delta$,  whose  value can also  be optimized to minimize the achievable expected rate $R({\delta},\pbf)$.



\section{Simulations and Discussions}~\label{sec:Simulations}
We numerically compare the performance of the polynomial-time CA-RAP/CM scheme given in Sec. \ref{subsec:correlation aware rate}, with existing correlation-unaware schemes: Local Caching with Unicast Delivery (LC/U), Local Caching with Naive Multicast Delivery (LC/NM), and RAP Caching with Coded Multicast Delivery (RAP/CM) \cite{ji15order}. 
LC/U consists of the conventional LFU caching policy, in which the $M$ most popular files are cached at each receiver, followed by unicast delivery. 
LC/NM combines LFU caching with naive multicast delivery, where a common uncoded stream of data packets is simultaneously received and decoded by multiple receivers. RAP/CM is the combination of RAP caching and coded multicasting as presented in \cite{ji15order}, and is equivalent to CA-RAP/CM when $\delta=0$.

We consider a broadcast caching network with $n=10$ receivers  and $m=100$ files requested according to a Zipf distribution with parameter $\alpha$ given by $q_{f} = {f^{-\alpha}}/{\sum_{j=1}^{m}j^{-\alpha}}, \; \forall f \in [m]$.
Under demand distribution $\qbf$, we assume a match matrix $\Gbf$ such that, for each  $(f, f') \in [m]^2$,  $\Gbf_{f,f'}= B_{\rm match}$.
Fig. \ref{fig:alpha 0.8} displays the rates for $\alpha=0.8$, $\delta = 0.2$, and $mB_{\rm match}=4$, as $M$ varies from $0$ to $100$.
We observe that the correlation-aware scheme is able to achieve rate reductions that go well beyond the state of the art correlation-unaware counterpart. 
Specifically, CA-RAP/CM  achieves a $2.7\times$ reduction in the expected rate compared to LC/U and a $2.4\times$ reduction compared to RAP/CM for $M=10$.

\begin{figure}
  \centering
  \includegraphics[width=2.8in]{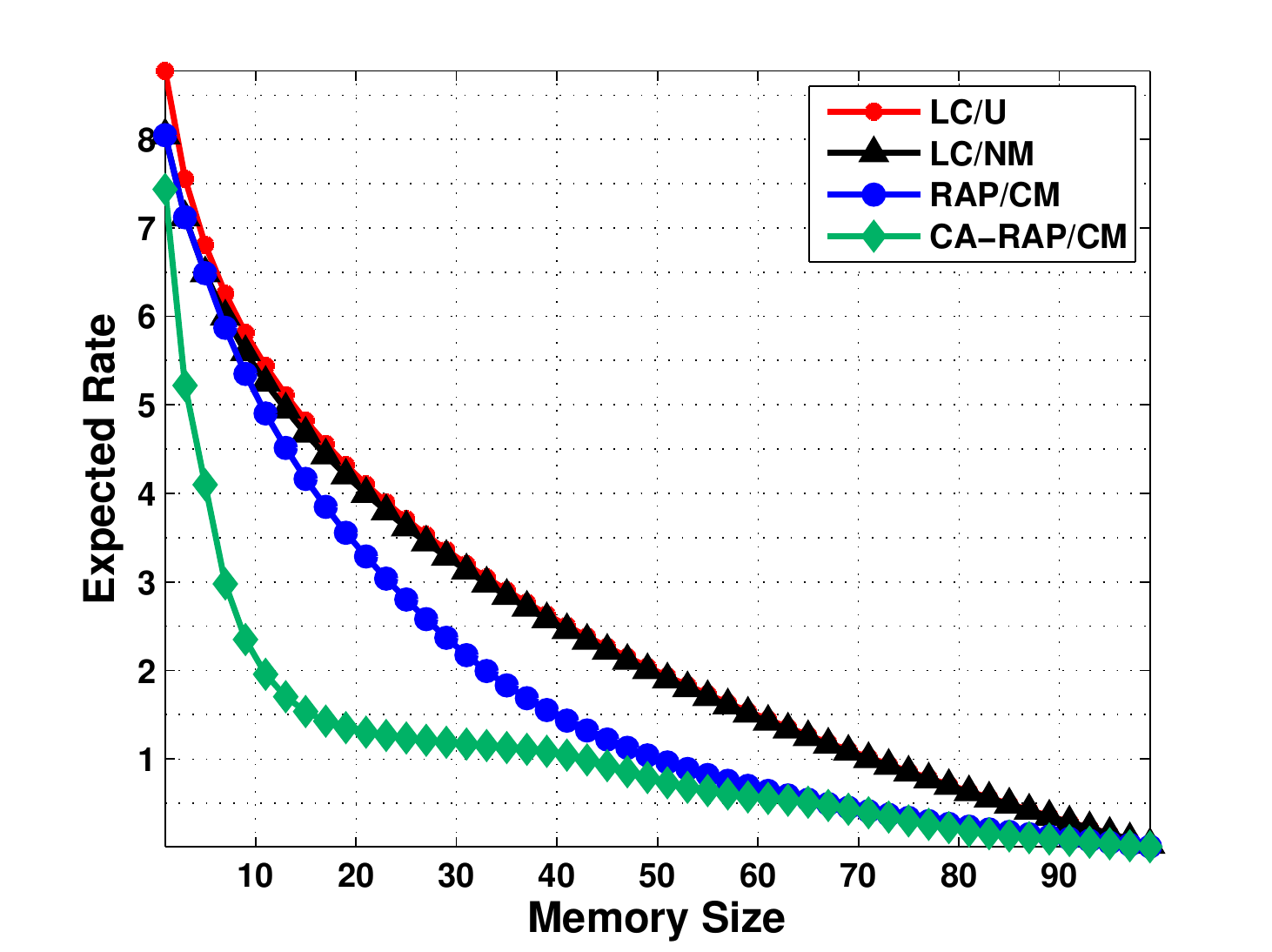}
  \caption{Performance of CA-RAP/CM in a network with Zipf parameter $\alpha = 0.8$, $n = 10$, $m = 100$, $\delta = 0.2$, and $mB_{\rm match}= 4$.}
  \label{fig:alpha 0.8}
\end{figure}

\section{Conclusion}~\label{sec:Conclusion}
In this work, we have shown how exploiting the correlation among the library files can result in more efficient content delivery over cache-aided networks. We proposed a correlation-aware caching scheme in which receivers store content pieces based on their popularity as well as on their correlation with the rest of the file library in the caching phase, and receive compressed versions of the requested files according to the information distributed across the network and their joint statistics during the delivery phase. 
The proposed scheme is shown to significantly outperform state of the art approaches that treat library files as mutually independent.

Ongoing and future work entail investigating the effect of relevant system parameters that characterize the correlated library, such as  $\delta$ and $\Gbf$, on the achievable expected rate, as well as providing order-optimality results. 
An alternative correlation-aware scheme is proposed in \cite{Hassanzadeh2016ITW} where  
the more relevant content from the jointly compressed library is stored during the caching phase, and the transmissions during the delivery phase ensure perfect reconstruction of the requested content. 


\bibliographystyle{IEEEtran}
\bibliography{ISTC_arxiv_v1}

\end{document}